\documentclass[11pt]{article}
\usepackage{odonnell}

\newcommand{\remove}[1]{}

\newcommand{\hlbrt}{\mathcal{H}} \newcommand{\bounded}{\mathcal{B}(\hlbrt)}   %\newcommand{trace}{\mathsf{tr}}

\newcommand{\eigenspace}[1]{V_1({#1})}%{\calF_{#1}}
\newcommand{\eigenspaceprime}{\eigenspace{\wh{\Phi}}}
\newcommand{\dimfixed}{m}

\newcommand{\gnote}[1]{}%{\todo[color=red!20!, size=\footnotesize]{Guy: #1}} %%%%%%%%%%%%%%%%%%%%%
\newcommand{\prm}{p} %\newcommand{\prm}{\alpha}

\begin{document}

\title{Quantum automata cannot detect biased coins, even in the limit}
\author{Ryan O'Donnell\thanks{Computer Science Dept., Carnegie Mellon Univ.  Supported by NSF grant CCF-1618679. \texttt{odonnell@cs.cmu.edu}}
\and Guy Kindler\thanks{School of Computer Science and Engineering, Hebrew
Univ.\ of Jerusalem.  Supported by BSF grant 2012220. \texttt{gkindler@cs.huji.ac.il}}}

\maketitle

\begin{abstract}
    Aaronson and Drucker (2011) asked whether there exists a quantum finite automaton that can distinguish fair coin tosses from biased ones by spending significantly more time in accepting states, on average, given an infinite sequence of tosses.  We answer this question negatively.
\end{abstract}

\section{Introduction}              \label{sec:intro}

In a 2011 work, Aaronson and Drucker~\cite{AD11} investigated the ability of a finite automaton to distinguish, given an infinite sequence of coin tosses, whether the coins are fair or $(\frac12 \pm \eps)$-biased.  There are several axes of consideration discussed in~\cite{AD11}, three of which we state here:
\begin{enumerate}
    \item Whether the automaton is classical (and probabilistic), or quantum.
    \item Whether $\eps > 0$ is ``known'' or not; i.e., whether the automaton can depend on~$\eps$.
    \item The mechanism by which the automaton makes its decision.  One possibility is that the automaton guesses ``biased'' by halting, and guesses ``fair'' by running forever.  A laxer possibility is that the automaton always runs forever, with each of its states designated ``biased'' or ``fair''; its final decision is based on the limiting time-average it spends in ``biased'' vs.\ ``fair'' states.  We refer to the two mechanisms as ``one-sided halting'' and ``limiting acceptance''.
\end{enumerate}
For example, an old result of Hellman and Cover~\cite{HC70} is that
even when $\eps$ is known and limiting acceptance is allowed, a
classical automaton needs~$\Omega(1/\eps)$ states to solve the
problem.  On the other hand, Aaronson and Drucker made the interesting
observation that for every fixed known~$\eps$, there's a quantum
automaton with just~$2$ states that solves the problem using one-sided
halting.  They also showed no
quantum automaton with a fixed number of states can solve the problem
for every \emph{unknown}~$\eps$, if the decision mechanism is
one-sided halting.

Aaronson and Drucker asked whether
the same negative result holds even if the automaton is allowed to use
the limiting acceptance decision mechanism.  Indeed, for the 48
different variations of the problem they considered, this was the only
variant that remained unsolved.  In 2014, Aaronson called this
question one of the ``Ten Most Annoying Problems in Quantum
Computing''~\cite{Aar14}.

In this work, we make the world of quantum computing $10\%$ less
annoying by resolving the problem in the negative.  Stated
informally, our main theorem
is the following (a precise phrasing appears below after we give some
formal definitions):
\begin{theorem} \label{thm:main} There is no quantum finite automaton
  that has the following property, simultaneously for every
  $\eps \in[-\frac12,\frac12]\setminus\{0\}$: Given access to an infinite sequence of coin tosses,
  if the coin is $(\half + \eps)$-biased then the automaton spends
  at least~$2/3$ of its time guessing ``biased'', and if the coin is
  fair then the automaton spends at least~$2/3$ of its time guessing
  ``fair''.
\end{theorem}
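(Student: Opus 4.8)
\emph{Proof idea.}
The plan is to reduce the theorem to a rigidity statement about a single quantum channel and then rule out the one behavior that could let an automaton succeed. Fix a candidate automaton, with (finite-dimensional) state space $\hlbrt$, transition unitaries $U_0,U_1$ applied on reading tails/heads, initial state $\rho_0$, and projector $\Pi$ onto the states that guess ``biased''; write $\mathrm{Ad}_U(X)=UXU^{\dagger}$. On a random $(\half+\eps)$-coin string the automaton's averaged density operator after $t$ tosses is $\Phi_p^{t}(\rho_0)$, where $p=\half+\eps$ and
\[
\Phi_p \;=\; (1-p)\,\mathrm{Ad}_{U_0}\;+\;p\,\mathrm{Ad}_{U_1}
\]
is a mixed-unitary quantum channel on $\bounded$, affine in $p$. (If one prefers the model in which $\{\Pi,\,I-\Pi\}$ is measured at every step, only cosmetic changes are needed: $\Phi_p$ becomes the mixed-unitary channel built from $U_0$, $(2\Pi-I)U_0$, $U_1$, $(2\Pi-I)U_1$ with weights $\tfrac{1-p}{2},\tfrac{1-p}{2},\tfrac p2,\tfrac p2$, still affine in $p$ with all weights positive on $(0,1)$.) Since a quantum channel is power-bounded, its Ces\`aro averages converge: $\widehat{\Phi}_p:=\lim_{T\to\infty}\tfrac1T\sum_{t=1}^{T}\Phi_p^{t}$ exists, and because the peripheral eigenvalues of a channel are semisimple it is exactly the projection of $\bounded$ onto $\ker(\Phi_p-I)$ along $\mathrm{Ran}(\Phi_p-I)$. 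Thus the quantity of interest --- the limiting (expected) fraction of time the automaton spends guessing ``biased'' against a $(\half+\eps)$-coin --- is $A(\eps):=\trace\!\big(\Pi\,\widehat{\Phi}_{\half+\eps}(\rho_0)\big)$, and the theorem is equivalent to: no automaton has $A(\eps)\ge\tfrac23$ for all $\eps\neq0$ and simultaneously $A(0)\le\tfrac13$.

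The engine is the following rigidity lemma, which I would prove first: \emph{for every $p\in(0,1)$, $\ker(\Phi_p-I)$ equals the $p$-independent subspace $\mathcal F:=\{X\in\bounded:\ U_0XU_0^{\dagger}=X=U_1XU_1^{\dagger}\}$.} One inclusion is immediate. For the other, suppose $X=(1-p)U_0XU_0^{\dagger}+pU_1XU_1^{\dagger}$. Conjugation by a unitary is an isometry of the Hilbert--Schmidt norm, so here $X$ is written as a convex combination, with the positive weights $1-p$ and $p$, of two vectors having the same norm $\|X\|$ as $X$ itself; the triangle inequality is therefore saturated, and in an inner product space this forces the two vectors to be equal, hence both equal to $X$. (The four-unitary, four-positive-weight version gives the analogous statement in the measured model.) In particular $m:=\dim\ker(\Phi_p-I)$ does not depend on $p\in(0,1)$.

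Given the lemma, $\Phi_p-I$ has constant rank $\dim\bounded-m$ for $p$ near $\half$, so $p\mapsto\mathrm{Ran}(\Phi_p-I)$ is a continuous curve in the Grassmannian; and since the eigenvalue $1$ is semisimple, $\mathrm{Ran}(\Phi_p-I)$ is always a complement of the fixed space $\mathcal F$. Therefore the projection $\widehat{\Phi}_p$ onto $\mathcal F$ along $\mathrm{Ran}(\Phi_p-I)$ depends continuously on $p$ at $p=\half$, and hence so does $A$ at $\eps=0$. But if the automaton worked then $A(\eps)\ge\tfrac23$ for all small $\eps\neq0$, forcing $A(0)=\lim_{\eps\to0}A(\eps)\ge\tfrac23$, which contradicts $A(0)\le\tfrac13$; this would finish the proof.

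The step I expect to be the real obstacle is precisely the claim that $\widehat{\Phi}_p$ does not jump at the fair point $p=\half$ --- a jump there being exactly the loophole an automaton would need. Two effects could in principle produce one: (i) the fair channel $\Phi_{1/2}$ might have strictly more fixed points than the biased ones, so that $A(0)$ is diluted by fixed states outside the range of $\Pi$ while $A(\eps)$ for $\eps\neq0$ is not; (ii) an eigenvalue $\lambda(p)\neq1$ of $\Phi_p$ might drift to $1$ as $p\to\half$, joining the eigenvalue-$1$ eigenspace only in the limit. The rigidity lemma kills (i) outright. Effect (ii) would force a nontrivial Jordan block for the eigenvalue $1$ to appear at $p=\half$ --- the algebraic multiplicity jumping up while $\ker(\Phi_p-I)$ stays $m$-dimensional --- which is impossible because $\Phi_{1/2}$ is a power-bounded quantum channel; in the range-and-kernel picture this semisimplicity is exactly what makes $\mathrm{Ran}(\Phi_p-I)$ a genuine, continuously-varying complement of $\mathcal F$. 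Getting these two ingredients --- $p$-independence of $\ker(\Phi_p-I)$ via Hilbert--Schmidt convexity, and semisimplicity of a channel's peripheral spectrum --- to interlock is the whole content; the rest is bookkeeping about the automaton model and elementary linear algebra.
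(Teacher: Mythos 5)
Your high-level strategy is correct and matches the paper's: reduce the theorem to continuity at $p=\tfrac12$ of the Ces\`aro limit $\widehat\Phi_p$, and reduce that to showing the fixed space of $\Phi_p$ does not change as $p$ moves through $(0,1)$. The paper does exactly this (it proves a dimension-constancy result, Theorem~\ref{thm:same-dim2}, and then deduces continuity of the projection by the same kernel/range argument you sketch). However, your proof of the rigidity lemma has a gap that is not cosmetic: it covers only mixed-unitary channels.

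The paper's model allows the transition maps $\Phi_0,\Phi_1$ to be \emph{arbitrary} quantum channels, not conjugation by unitaries (nor unitaries followed by a projective measurement, which as you correctly note is still mixed-unitary via $\tfrac12\mathrm{Ad}_U+\tfrac12\mathrm{Ad}_{(2\Pi-I)U}$). Your proof of the lemma is a Hilbert--Schmidt strict-convexity argument that hinges on each Kraus term $K_j X K_j^\dagger$ having the same HS-norm as $X$; this holds precisely when all $K_j$ are (scalar multiples of) unitaries. For a general channel this fails, and so does the conclusion. In fact the lemma is \emph{false} in general: take $\hlbrt=\C^2$, let $\Phi_0(X)=\trace(X)\,\ketbra{0}{0}$ and $\Phi_1(X)=\trace(X)\,\ketbra{1}{1}$; then $\Phi_p$ has the unique fixed state $(1-p)\ketbra{0}{0}+p\ketbra{1}{1}$, so $\ker(\Phi_p-I)$ is a one-dimensional line that rotates with $p$ rather than a $p$-independent subspace, and the common fixed space of $\Phi_0,\Phi_1$ is $\{0\}$. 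So only the \emph{dimension} of $\ker(\Phi_p-I)$ is constant in $p$ (which is still enough for the continuity-of-projection step, as you also observe via the range argument) --- but proving that dimension constancy for general channels is the entire technical content of the paper, and it requires the structure theory of quantum-channel fixed points: minimal enclosures, the decomposition $\hlbrt=\calD\oplus\bigoplus_i\calW_i$ with $\calW_i=\bigoplus_j\calV_{i,j}$, the fact that enclosures are characterized by Kraus-operator invariance (Fact~\ref{fact:kraus-encl}), and the observation that $\Phi_p$ for all $p\in(0,1)$ are ``combinatorially equivalent'' in the paper's sense. None of this is a bookkeeping detail; it replaces your isometry argument entirely. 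Your write-up would be a clean proof if the theorem were stated for unitary (or mixed-unitary) quantum finite automata, but as written it does not prove the theorem for the general model.
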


\noindent Proving this theorem involves a careful understanding of the fixed points of quantum channels.

\section{Classical and quantum automata}
In this section we review the definitions of probabilistic and quantum finite state automata.  Although we are ultimately only concerned with quantum automata, we feel it is instructive to also discuss probabilistic automata at the same time. All of our automata will have input alphabet $\Sigma = \{0,1\}$, which may be thought of as $\{\text{tails}, \text{heads}\}$.

A classical deterministic automaton on alphabet $\Sigma = \{0,1\}$ has some $d$ \emph{basic-states},\footnote{There is an unfortunate terminology clash involving the word ``state'' --- in automata theory, ``states'' are the basic vertices in automaton graphs, whereas in quantum mechanics a ``state'' usually means the ``mixed quantum state'' or ``density operator'' of a given system. Throughout we'll refer to the former as ``basic-states'' and the latter as ``quantum-states''.} an initial basic-state $i_0 \in [d]$, and transition rules $f_0, f_1 : [d] \to [d]$. Given a sequence of input symbols $w_1, w_2, w_3, \dots \in \{0,1\}$, the automaton operates as follows:  It starts in basic-state $i_0$ at time~$0$. Then, if it is in basic-state $i_{t}$ at time~$t \in \N$, it transitions to basic-state $f_{w_{t+1}}(i_{t})$ at time~$t+1$.  Automata also typically have their basic-states classified as ``accept'' or ``reject''; we discuss this more later.

One can also consider classical \emph{probabilistic} automata. These have randomized transitions, which can be encoded by a pair of $d \times d$ \emph{stochastic matrices} $S_0, S_1$.  Now at any time~$t$ the automaton can be in a ``probabilistic-state'', represented by a length-$d$ probability vector $\pi_t$.  (An initial probabilistic-state~$\pi_0$ is also specified.)  On reading symbol~$w_{t+1}$, the automaton transitions to the probabilistic-state $\pi_{t+1} = S_{w_{t+1}} \pi_t$.

Finally, the setting for a \emph{quantum} automaton is a $d$-dimensional Hilbert space~$\hlbrt$ (which we may think of as having an orthonormal basis of ``basic-state vectors'' $\ket{1}, \dots, \ket{d}$).  At any time~$t$, the automaton has a ``quantum-state'', which is a density operator $\rho_t \in \bounded$.  Here $\bounded$ denotes the set of linear operators on~$\hlbrt$, and a density operator means a positive semidefinite operator of trace~$1$.  (Probabilistic-states are the special case of quantum-states in which $\rho_t$ is diagonal with respect to $\ket{1}, \dots, \ket{d}$.)  The transition rules are now any two allowable quantum transformations~$\Phi_0, \Phi_1$; i.e., they are \emph{quantum channels (superoperators)} on~$\bounded$.  Here a quantum channel means a linear map $\Phi : \bounded \to \bounded$ that is completely positive and trace-preserving; an equivalent condition is that there exist (non-unique) \emph{Kraus operators} $K_1, \dots, K_r \in \bounded$ with $\sum_{i=1}^r K_i^\dagger K_i = \bbone$ such that $\Phi(\rho) = \sum_{i=1}^r K_i \rho K_i^\dagger$.  (For more on quantum channels, see e.g.~\cite{Wol12}.)  Again, an initial quantum-state~$\rho_0$ is given, and on reading symbol~$w_{t+1}$, the automaton transitions from quantum-state $\rho_t$ to quantum-state $\rho_{t+1} = \Phi_{w_{t+1}}(\rho_t)$.

\paragraph{Automata with random inputs.} This paper is concerned with automata whose inputs are infinite sequences of $\prm$-biased coin tosses, $p \in [0,1]$.  More formally, we always assume the input symbols $w_1, w_2, w_3, \dots \in \{0,1\}$ are chosen independently at random with $\Pr[w_t = 1] = \prm$.  Because of this assumption, we can give a simplified formalization of probabilistic and quantum automata.  In the case of probabilistic automata, at each time step (independently) we apply~$S_1$ with probability~$\prm$ and $S_0$ with probability~$1-\prm$.  It is clear that this is equivalent to simply applying the stochastic matrix $S_\prm \coloneqq \prm S_1 + (1-\prm)S_0$ at each time step.  In other words, the probabilistic-state of a probabilistic automaton after~$t$ time steps is simply $S_\prm^t \pi_0$.  The setup is precisely equivalent to a Markov chain on~$[d]$ with transition matrix~$S_\prm$.

Similarly for quantum automata, at each time step we apply~$\Phi_1$ with probability~$p$ and $\Phi_0$ with probability~$1-\prm$; this is physically equivalent to simply applying the channel $\Phi_\prm \coloneqq \prm \Phi_1 + (1-\prm)\Phi_0$ at each time step.  (This is ultimately because being in quantum-state $\rho$ with probability~$\prm$ and quantum-state $\rho'$ with probability~$1-\prm$ is physically equivalent to being in quantum-state $\prm\rho + (1-\prm)\rho'$.) Thus the quantum-state of a probabilistic automaton after~$t$ time steps is simply~$\Phi_\prm^t(\rho_0)$; we have here the quantum analogue of a Markov chain.

\paragraph{Automaton acceptance probability.}  As discussed in Section~\ref{sec:intro}, we will be considering ``limiting acceptance'', the most relaxed possible notion for automaton acceptance.  We first define this in the context of probabilistic automata.  Here, each basic-state in~$[d]$ is classified as either guessing ``Fair'' or ``Biased''.  We write $e_{\text{fair}} \in \R^d$ for the $0$-$1$ indicator of the Fair states. % and $e_{\text{biased}} \in \R^d$ for the $0$-$1$ indicator of the Biased states (so $e_{\text{fair}} +  e_{\text{biased}} = (1, 1, \dots, 1)$).
Thus if the automaton is in probabilistic-state $\pi \in \R^d$, the probability it is in a Fair basic-state is $\langle e_{\text{fair}}, \pi\rangle$.  We then consider, for a sequence of~$T$ coin tosses, the \emph{average} probability with which the automaton is in a Fair basic-state:
\[
    f_T(\prm) \coloneqq \frac{1}{T} \sum_{t=1}^T \langle e_{\text{fair}}, S_\prm^t \pi_0 \rangle = \Bigl\la e_{\text{fair}}, \Bigl(\frac{1}{T} \sum_{t=1}^T  S_\prm^t\Bigr) \pi_0 \Bigr\rangle.
\]
Finally, we consider the limiting value of this probability:
\[
    f(\prm) \coloneqq \lim_{T \to \infty} f_T(\prm) = \la e_{\text{fair}}, S_\prm^\infty \pi_0 \ra, \quad \text{where } S_\prm^\infty \coloneqq \lim_{T \to \infty} \frac{1}{T} \sum_{t=1}^T  S_\prm^t.
\]
Here we relied on the well-known fact that the limiting matrix $S_\prm^\infty$ exists.  (In fact, $S_\prm^\infty$ is also a stochastic matrix, and it acts by projection onto the $1$-eigenspace of $S_\prm$; we discuss this further in Section~\ref{sec:outline}.)  One may then say that the probabilistic automaton ``guesses Fair in the limit'' if $f(\prm) \geq \frac23$, and ``guesses Biased in the limit'' if $f(\prm) \leq \frac13$.  (It may be considered ``indecisive'' otherwise.)

The definitions for a quantum automaton are extremely similar.  The automaton is assumed to come equipped with an ``acceptance POVM'', $\{E_{\text{fair}}, \bbone - E_{\text{fair}}\}$. (Here $E_{\text{fair}} \in \bounded$ is any operator satisfying $0 \preceq E_{\text{fair}} \preceq \bbone$, and $\bbone$ denotes the identity operator.)  If the automaton is in quantum-state~$\rho$, the probability of it measuring ``Fair'' is $\la E_{\text{fair}}, \rho\ra \coloneqq \tr(E_{\text{fair}}^\dagger\rho)$.  We can then again define the limiting average probability of guessing ``Fair'' via
\begin{align}
    f_T(\prm) &\coloneqq \frac{1}{T} \sum_{t=1}^T \langle E_{\text{fair}}, \Phi_\prm^t \pi_0 \rangle = \Bigl\la E_{\text{fair}}, \Bigl(\frac{1}{T} \sum_{t=1}^T  \Phi_\prm^t\Bigr) \pi_0 \Bigr\rangle, \nonumber \\
    f(\prm) &\coloneqq \lim_{T \to \infty} f_T(\prm) = \la E_{\text{fair}}, \Phi_\prm^\infty \pi_0 \ra, \quad \text{where } \Phi_\prm^\infty \coloneqq \lim_{T \to \infty} \frac{1}{T} \sum_{t=1}^T  \Phi_\prm^t. \label{eqn:f}
\end{align}
Again, it is known that the limiting operator $\Phi_\prm^\infty$ exists; this is explicitly discussed in Section~\ref{sec:outline}.  As before, one may say that the quantum automaton ``guesses Fair in the limit'' if $f(\prm) \geq \frac23$, and ``guesses Biased in the limit'' if $f(\prm) \leq \frac13$.

We may now state the main theorem of this paper:
\begin{theorem}                                     \label{thm:it's-continuous}
    In the setting of quantum automata reading $\prm$-biased bits (as described above), the function~$f$ from~\eqref{eqn:f} is a continuous function of $\prm \in (0,1)$.
\end{theorem}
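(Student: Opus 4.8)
The plan is to reduce the continuity of $f$ at a point $\prm_0 \in (0,1)$ to the behavior of the spectral projection associated with the eigenvalue $1$ of the channel $\Phi_\prm$. The map $\prm \mapsto \Phi_\prm = \prm\Phi_1 + (1-\prm)\Phi_0$ is affine, hence as smooth as one could wish; the subtlety is entirely in the passage to the Ces\`aro limit $\Phi_\prm^\infty$. I would first record the standard fact that for any quantum channel $\Phi$, the spectrum lies in the closed unit disk, the eigenvalue~$1$ is always present (channels preserve trace), and $\Phi_\prm^\infty$ equals the spectral projector $P_\prm$ onto the eigenvalue-$1$ generalized eigenspace along the span of the remaining spectral subspaces. (One should check there are no nontrivial Jordan blocks at eigenvalue~$1$ — true for channels, since $\|\Phi^t\|$ is bounded — so the Ces\`aro average of $\Phi_\prm^t$ converges to $P_\prm$.) Then $f(\prm) = \langle E_{\text{fair}}, P_\prm(\rho_0)\rangle$, and the whole theorem reduces to showing $\prm \mapsto P_\prm$ is continuous on $(0,1)$.

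The natural tool for continuity of a spectral projector is the holomorphic functional calculus / Riesz projection formula: if one can find, for each $\prm_0$, a fixed contour $\gamma$ in $\C$ separating the eigenvalue $1$ from all other eigenvalues of $\Phi_{\prm_0}$, then for $\prm$ near $\prm_0$ the same contour still does the separating, and
\[
    P_\prm \;=\; \frac{1}{2\pi i}\oint_\gamma (z\,\mathrm{Id} - \Phi_\prm)^{-1}\,dz,
\]
which is manifestly continuous (indeed real-analytic) in $\prm$ because the resolvent is. So the real content is a \emph{spectral gap} statement: for every $\prm_0 \in (0,1)$, the eigenvalue $1$ of $\Phi_{\prm_0}$ is isolated from the rest of the spectrum of $\Phi_{\prm_0}$ — equivalently, no eigenvalue $\lambda \neq 1$ approaches $1$ as $\prm \to \prm_0$. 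Since the dimension $d$ is fixed, the spectrum of $\Phi_\prm$ varies continuously (the eigenvalues are the roots of a polynomial whose coefficients depend continuously — in fact affinely — on $\prm$), so "eigenvalues can't sneak up to $1$" is exactly the obstruction to clearing with a naive argument: a priori an eigenvalue could equal $1$ at $\prm_0$ but be some $\lambda(\prm) \to 1$, $\lambda(\prm)\neq 1$, nearby, and then the eigenvalue-$1$ eigenspace could jump in dimension and $P_\prm$ could be discontinuous.

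This is the step I expect to be the crux, and I would handle it using the completely-positive, trace-preserving structure — this is surely where "a careful understanding of the fixed points of quantum channels" enters. The key structural fact is that the peripheral spectrum of a quantum channel (eigenvalues on the unit circle) is rigid: the peripheral eigenvalues form a group of roots of unity (after restricting to the "recurrent" subspace), the corresponding eigenspaces are spanned by "partial isometry"-type operators, and $\Phi$ acts on the fixed-point algebra as an automorphism. More concretely, I would argue that the multiplicity of the eigenvalue $1$ (and more generally the structure of the peripheral spectrum) of $\Phi_\prm$ is governed by a combinatorial object — the partition of $\{1,\dots,d\}$ into recurrent communicating classes and transient part, determined by which Kraus entries are nonzero — and that this combinatorial structure, hence the eigenvalue-$1$ multiplicity, is \emph{constant} for all $\prm \in (0,1)$ because $\Phi_\prm$ has support exactly equal to $\mathrm{supp}(\Phi_0) \cup \mathrm{supp}(\Phi_1)$ for every such $\prm$ (no cancellation: the coefficients $\prm, 1-\prm$ are strictly positive). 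Once the eigenvalue-$1$ eigenspace has locally constant dimension and (via the no-Jordan-block fact) equals the algebraic eigenspace, a short perturbation-theoretic argument — e.g., total eigenvalue multiplicity inside a small disk around $1$ is upper-semicontinuous and lower-bounded by this constant — forces the gap, and the Riesz formula then delivers continuity of $P_\prm$, hence of $f$. The endpoints $\prm\in\{0,1\}$ are excluded precisely because there the support can drop and the combinatorial class structure can change discontinuously.
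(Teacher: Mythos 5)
Your overall strategy coincides with the paper's: reduce continuity of $f$ to continuity of the spectral projector $\Phi_\prm^\infty$ onto $\eigenspace{\Phi_\prm}$, and identify as the crux the fact that $\dim \eigenspace{\Phi_\prm}$ is constant on $\prm \in (0,1)$ (the paper's Theorem~\ref{thm:same-dim2}). Your deduction of continuity \emph{from} that constancy, though, is a genuinely different route. The paper uses a more elementary linear-algebraic fact: $\Phi_\prm^\infty$ is the projection onto $\ker(\bbone - \Phi_\prm)$, and the kernel of a continuously-parameterized matrix family varies continuously wherever the nullity is locally constant. You instead invoke the Riesz holomorphic-calculus formula $P_\prm = \frac{1}{2\pi i}\oint_\gamma (z - \Phi_\prm)^{-1} dz$, and derive a spectral gap at $1$ from the constancy of the algebraic multiplicity together with upper semicontinuity of the total multiplicity inside the contour. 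This is correct (and you rightly flag the semisimplicity of peripheral eigenvalues as the reason $\Phi^\infty$ equals the Riesz projector), though it carries a bit more overhead than the paper's argument.

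The real gap is in the step you label ``I would argue that'': your sketch of why $\dim \eigenspace{\Phi_\prm}$ is constant is too classical and does not correctly describe the quantum picture. You propose that the multiplicity of eigenvalue $1$ is governed by ``the partition of $\{1,\dots,d\}$ into recurrent communicating classes and transient part, determined by which Kraus entries are nonzero.'' There are three problems. First, ``which Kraus entries are nonzero'' is not well-defined, since Kraus operators are highly non-unique; the paper's precise substitute is the notion of \emph{combinatorially equivalent} channels (having Kraus representations with pairwise-proportional operators), and checking that all $\Phi_\prm$ for $\prm \in (0,1)$ are pairwise combinatorially equivalent requires forming the union of the $\sqrt{\prm}$-scaled and $\sqrt{1-\prm}$-scaled Kraus sets. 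Second, the quantum analogue of ``communicating classes'' is not a partition of the basis set $\{1,\dots,d\}$ but rather a decomposition of $\hlbrt$ into minimal enclosures, i.e.\ minimal subspaces invariant under all Kraus operators, which need not align with any coordinate basis. Third, and most substantively, $\dim \eigenspace{\Phi}$ is \emph{not} just the number of such classes: it equals $\sum_i m_i^2$ where $m_i$ counts how many mutually isomorphic minimal enclosures are grouped into the $i$-th block $\calW_i$ (Theorem~\ref{thm:BN}). Establishing that this quantity is the same for two combinatorially equivalent channels requires both the existence of a decomposition achieving $\dim \eigenspace{\Phi} = \sum m_i^2$ (from \cite{Wol12,BN12}) and the inequality $\dim \eigenspace{\wh\Phi} \le \sum m_i^2$ for \emph{any} valid decomposition (from \cite{CP15}), applied to the same decomposition for both channels. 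None of this structure --- the $m_i^2$ formula, the distinction between minimal enclosures and blocks of isomorphic ones, or the two complementary decomposition theorems --- appears in your sketch, and without it the ``combinatorial rigidity'' claim you lean on does not yet have a proof.
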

This theorem is a formal strengthening of Theorem~\ref{thm:main}, our negative result for coin distinguishing stated in Section~\ref{sec:intro}. For example, it implies that if an automaton guesses ``Fair'' in the limit'' for $\prm = \frac12$, then for all sufficiently small~$\eps$ it \emph{cannot} guess ``Biased'' in the limit for $\prm = \frac12 \pm \eps$. In fact, we get the inability of quantum automata to distinguish $\prm$-biased and $(\prm \pm \eps)$-biased coins with limiting acceptance for any fixed $\prm \in (0,1)$.  As noted in~\cite{AD11}, this is sharp in the sense that there is a trivial $2$-state deterministic classical automaton that distinguishes a $0$-biased coin from any $\eps$-biased coin, even with one-sided halting.

\section{Outline of the proof} \label{sec:outline}
Here we give an outline of the proof of Theorem~\ref{thm:it's-continuous}.  At the same time, it will be instructive to outline the analogous proof in the special case of probabilistic automata.  To prove that the limiting acceptance probability~$f(\prm)$ from~\eqref{eqn:f} is continuous for $\prm \in (0,1)$, it is enough to prove the following:
\begin{theorem} \label{thm:operator-cts}
     $\Phi^\infty_\prm$ is continuous for~$\prm \in (0,1)$.
\end{theorem}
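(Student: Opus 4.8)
The plan is to understand $\Phi_\prm^\infty$ as the projection onto the fixed-point space of $\Phi_\prm$ and then track how this projection varies with $\prm$. First I would recall the structural fact (the ergodic / mean-ergodic theorem for quantum channels) that for any quantum channel $\Phi$, the Ces\`aro limit $\Phi^\infty \coloneqq \lim_{T\to\infty}\frac1T\sum_{t=1}^T\Phi^t$ exists and equals the spectral projection of $\Phi$ onto its eigenvalue-$1$ eigenspace $\eigenspace{\Phi}\subseteq\bounded$, taken \emph{along} the complementary $\Phi$-invariant subspace on which $(I-\Phi)$ is invertible. The reason this decomposition exists with no Jordan block at eigenvalue $1$ is that $\Phi$ is a contraction in the appropriate norm (trace norm on Hermitian operators, or more robustly: all eigenvalues lie in the closed unit disk and those on the unit circle are semisimple), so $\bounded = \eigenspace{\Phi}\oplus\operatorname{ran}(I-\Phi)$, and $\Phi^\infty$ is exactly the idempotent with this kernel and range. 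So continuity of $\prm\mapsto\Phi_\prm^\infty$ reduces to continuity of this spectral projection, while $\prm\mapsto\Phi_\prm=\prm\Phi_1+(1-\prm)\Phi_0$ is manifestly (real-)analytic.

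The key steps, in order, are: (1) set up $\bounded$ as a finite-dimensional complex vector space and view each $\Phi_\prm$ as a matrix depending analytically on $\prm\in(0,1)$; (2) establish the clean splitting $\bounded=\eigenspace{\Phi_\prm}\oplus\operatorname{ran}(I-\Phi_\prm)$ for every $\prm\in(0,1)$ and identify $\Phi_\prm^\infty$ as the corresponding projection; (3) invoke the standard perturbation-theory fact that the Riesz spectral projection $P_\prm \coloneqq \frac{1}{2\pi i}\oint_\gamma (\zeta I-\Phi_\prm)^{-1}\,d\zeta$, where $\gamma$ is a fixed small circle around $1$, is continuous (indeed analytic) in $\prm$ \emph{provided} no other eigenvalue of $\Phi_\prm$ crosses $\gamma$; (4) conclude $\Phi_\prm^\infty = P_\prm$ from steps (2)–(3) and hence it is continuous. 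The subtlety in step (3) is that the dimension of $\eigenspace{\Phi_\prm}$ — equivalently the multiplicity of the eigenvalue $1$ — can genuinely jump as $\prm$ varies (e.g.\ at $\prm=\frac12$ the fixed-point space may be larger), so a single contour $\gamma$ separating $1$ from the rest of the spectrum need not exist uniformly.

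The main obstacle, then, is precisely this possible collision at the eigenvalue $1$: when eigenvalues of $\Phi_\prm$ approach $1$ as $\prm\to\prm_0$, the naive Riesz projection around a fixed tiny circle is not continuous. I expect to handle this by enlarging the contour: fix $\prm_0\in(0,1)$ and choose a circle $\gamma$ of radius $r<1$ centered at $0$; since every channel has spectrum in the closed unit disk with only semisimple eigenvalues on the unit circle, for $\prm$ near $\prm_0$ all eigenvalues of modulus $\ge r$ are eigenvalues of modulus exactly $1$, and the Riesz projection $Q_\prm$ onto that part of the spectrum is continuous in $\prm$ because no eigenvalue crosses $|\zeta|=r$ near $\prm_0$. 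Then I would decompose $Q_\prm$ further, using that on its range $\Phi_\prm$ acts as a unitary-like (diagonalizable, unit-modulus-eigenvalue) operator, and argue that the eigenvalue-$1$ part $\Phi_\prm^\infty$ is obtained from $Q_\prm$ by averaging: $\Phi_\prm^\infty = \lim_{T}\frac1T\sum_{t=1}^T (Q_\prm\Phi_\prm)^t$ and, crucially, on $\operatorname{ran}Q_\prm$ the iterates are bounded, so one can extract $\Phi_\prm^\infty$ as an honest limit that still varies continuously — the needed uniformity coming from the semisimplicity giving a $\prm$-uniform bound $\|\Phi_\prm^t\|\le C$. Alternatively, and perhaps more cleanly, I would prove $\|\Phi_\prm^\infty - \Phi_{\prm_0}^\infty\|\to 0$ directly: write $\Phi_\prm^\infty - \Phi_{\prm_0}^\infty = (\Phi_\prm^\infty-\Phi_\prm^N) + (\Phi_\prm^N-\Phi_{\prm_0}^N) + (\Phi_{\prm_0}^N-\Phi_{\prm_0}^\infty)$ where $\Phi_\prm^N \coloneqq \frac1N\sum_{t=1}^N\Phi_\prm^t$, bound the outer two terms by $O(1/N)$ times a uniform constant using the contraction/semisimplicity, and for fixed $N$ use analyticity of $\prm\mapsto\Phi_\prm^N$ to kill the middle term — this trades the contour argument for a uniform convergence-rate estimate, which I suspect is the honest heart of the matter the paper will have to supply.
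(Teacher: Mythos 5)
Your proposal correctly identifies $\Phi_\prm^\infty$ as the projection onto $\eigenspace{\Phi_\prm}$ and correctly flags the central danger: the dimension of $\eigenspace{\Phi_\prm}$ (equivalently, the multiplicity of eigenvalue~$1$) could jump. But there is a fundamental gap: you treat this jump as a live possibility at interior points (``e.g.\ at $\prm=\frac12$'') and then try to engineer around it with a larger Riesz contour or a uniform Ces\`aro rate. Neither workaround can succeed, because if the dimension did jump the statement would simply be \emph{false}. The rank of a projection is $\trace$ of the projection, an integer-valued continuous function; so any continuously-varying family of projections has locally constant rank. Concretely, if an eigenvalue $\lambda(\prm) \neq 1$ of $\Phi_\prm$ tends to~$1$ as $\prm\to\prm_0$, then that mode contributes~$0$ to $\Phi_\prm^\infty$ for $\prm\neq\prm_0$ but a full rank-one piece to $\Phi_{\prm_0}^\infty$, so $\Phi_\prm^\infty\not\to\Phi_{\prm_0}^\infty$. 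Your proposed uniform bound $\|\Phi_\prm^t\|\le C$ gives uniform \emph{boundedness} of the Ces\`aro averages, but not a uniform \emph{rate}: the decay of $\frac1T\sum_{t\le T}\lambda^t$ is $O\bigl(\frac{1}{T|1-\lambda|}\bigr)$, which degenerates exactly when $\lambda\to 1$, so the ``$O(1/N)$ uniformly'' step breaks down precisely in the scenario you are trying to handle.

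What the paper actually does is prove that this scenario cannot happen on $(0,1)$: the main technical result (Theorem~\ref{thm:same-dim2}) shows $\dim\eigenspace{\Phi_\prm}$ is \emph{constant} for $\prm\in(0,1)$, via the structure theory of fixed points of quantum channels (minimal enclosures, and the observation that all $\Phi_\prm$ for $\prm\in(0,1)$ are ``combinatorially equivalent,'' having proportional Kraus operators). Once constancy of the nullity of $\bbone-\Phi_\prm$ is known, your step~(3) becomes the standard perturbation fact, and the Riesz-contour picture you describe goes through with no extra machinery. So your outline has the right reduction in place but is missing the one genuinely hard ingredient — and your proposed substitutes for it are attempts to prove something stronger than is true (continuity even across a dimension jump), which no amount of contour-enlarging or rate-estimating can deliver.
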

\noindent Here for definiteness we can take the metric on channels induced by the operator norm on~$\bounded$; Theorem~\ref{thm:it's-continuous} then follows because matrix multiplication and inner product are continuous.\\

Now is a good time to review the properties of~$\Phi^\infty_\prm$.  In general, let $\Phi$ denote any channel on~$\bounded$.  Then the following are known~\cite[Prop.~6.3]{Wol12} (and easy) facts: First,
$
    \Phi^\infty \coloneqq \lim_{T \to \infty} \frac{1}{T} \sum_{t=1} ^T \Phi^t
$
exists and is itself a channel.  Second, as an operator $\Phi^\infty$, acts as projection onto the fixed points~$\eigenspace{\Phi}$ of~$\Phi$. Here we are using the following notation:
\begin{notation}
    For any operator~$A$ we write $\eigenspace{A}$ for the eigenspace of~$A$ with eigenvalue~$1$, i.e., the \emph{invariant} subspace for~$A$.
\end{notation}
\noindent As mentioned earlier, the analogous statements are true regarding~$S^\infty$, when $S$ is a stochastic operator.  (In both the probabilistic and quantum cases, the essential point is that the operator in question has spectral radius~$1$.)\\

Returning to Theorem~\ref{thm:operator-cts}, certainly $\Phi_\prm = \prm \Phi_1 + (1-\prm) \Phi_0$ varies continuously for~$\prm \in [0,1]$. But what we need to prove is that the \emph{invariant subspace} $\eigenspace{\Phi_\prm}$ of $\Phi_\prm$ varies continuously for $\prm \in (0,1)$.  There is one obvious potential obstruction: the \emph{dimension} of~$\eigenspace{\Phi_\prm}$ might change as~$\prm$ varies.  (As we will see, this is actually the only obstruction.)  Now in general, slightly perturbing a matrix \emph{can} change the dimension of its $1$-eigenspace.  However we are not concerned with completely general perturbations: we are just considering all the convex combinations of two fixed channels~$\Phi_0, \Phi_1$. The main technical theorem in our paper will be the following:
\begin{theorem} \label{thm:same-dim2} For any channels $\Phi_0, \Phi_1$, the dimension $\dim \eigenspace{\Phi_\prm}$ is the same for all $\prm \in (0,1)$.
\end{theorem}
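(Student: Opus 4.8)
\medskip
\noindent\emph{Sketch of a possible approach.}\quad
The plan is to show that the integer $m(\prm) \coloneqq \dim\eigenspace{\Phi_\prm}$ is both upper and lower semicontinuous in $\prm\in(0,1)$; being integer-valued on an interval, it is then forced to be constant. Upper semicontinuity is automatic, since $m(\prm) = \dim\ker(\Phi_\prm - \mathrm{id})$ and $\Phi_\prm$ depends affinely — in particular continuously — on $\prm$. In fact, expressing $\operatorname{rank}(\Phi_\prm - \mathrm{id})$ through vanishing of minors, which are polynomials in $\prm$, shows that $m(\prm)$ takes one value at all $\prm\in(0,1)$ outside a finite ``exceptional'' set, on which it is strictly larger; so the task is precisely to rule out those jumps, i.e.\ to prove lower semicontinuity.

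The first move is a structural reduction. For $\prm\in(0,1)$, call a subspace $\mathcal{K}\subseteq\hlbrt$ an \emph{enclosure} of $\Phi_\prm$ if $\rho$ supported on $\mathcal{K}$ implies $\Phi_\prm(\rho)$ supported on $\mathcal{K}$. I claim the lattice of enclosures is independent of $\prm\in(0,1)$, equal to the set of common enclosures of $\Phi_0$ and $\Phi_1$. The inclusion ``$\supseteq$'' is immediate from $\Phi_\prm = \prm\Phi_1 + (1-\prm)\Phi_0$. For ``$\subseteq$'', if $\mathcal{K}$ is an enclosure of $\Phi_\prm$ and $\rho\succeq 0$ is supported on $\mathcal{K}$, then $\prm\Phi_1(\rho) + (1-\prm)\Phi_0(\rho) = \Phi_\prm(\rho)$ is supported on $\mathcal{K}$; since $\Phi_0(\rho),\Phi_1(\rho)\succeq 0$ and the support of a sum of positive operators contains the support of each summand, $\Phi_0(\rho)$ and $\Phi_1(\rho)$ are each supported on $\mathcal{K}$, and such $\rho$ span $\mathcal{B}(\mathcal{K})$. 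Consequently the \emph{recurrent subspace} $\hlbrt_0$ — the support of a maximal-rank fixed state of $\Phi_\prm$, equivalently the span of the minimal enclosures — is the same subspace for all $\prm\in(0,1)$, and by the ergodic decomposition of quantum channels every fixed point of every $\Phi_\prm$ lies in $\mathcal{B}(\hlbrt_0)$. As $\hlbrt_0$ is a common enclosure of $\Phi_0,\Phi_1$, restricting all three channels to $\mathcal{B}(\hlbrt_0)$ preserves both the convex-combination structure and the quantity $m(\prm)$, and reduces to the case where every $\Phi_\prm$, $\prm\in(0,1)$, has a \emph{faithful} (full-rank) fixed state.

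In that case I would invoke the structure theory of the fixed-point set of a channel with a faithful fixed state: $\eigenspace{\Phi_\prm}$ is, after conjugation by a power of the fixed state, a $*$-subalgebra of $\mathcal{B}(\hlbrt_0)$; the delicate claim — discussed below — is that its dimension $m(\prm)$ equals $\dim\mathcal{N}_\prm'$, where $\mathcal{N}_\prm$ is the $*$-algebra generated by a set of Kraus operators for $\Phi_\prm$ together with their adjoints. Granting this, the theorem follows: Kraus operators for $\Phi_\prm$ are obtained from Kraus operators $\{K^{(0)}_i\}$ of $\Phi_0$ and $\{K^{(1)}_i\}$ of $\Phi_1$ as $\{\sqrt{1-\prm}\,K^{(0)}_i\} \cup \{\sqrt{\prm}\,K^{(1)}_i\}$, and rescaling generators by nonzero scalars does not change the $*$-algebra they generate, so $\mathcal{N}_\prm$ equals the $*$-algebra generated by $\{K^{(0)}_i\}\cup\{K^{(1)}_i\}$ and their adjoints — independent of $\prm\in(0,1)$ — whence $m(\prm) = \dim\mathcal{N}_\prm'$ is constant.

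The crux — what I expect to cost the most — is the identification $\dim\eigenspace{\Phi} = \dim\mathcal{N}'$ for a channel $\Phi$ with a faithful fixed state that is \emph{not} assumed unital: passing from fixed points to the commutant of the interaction algebra is standard for doubly stochastic channels, but for a merely faithful channel it must be mediated by a conjugation involving the fixed state $\sigma_\prm$, which need not be unique and a priori varies with $\prm$. One would either show this conjugation does not change the relevant dimension — e.g.\ by transferring to the adjoint $\Phi^\dagger$, which is unital with faithful fixed point $\bbone$ and so needs no conjugation, while handling that $\Phi^\dagger$ is unital but not trace-preserving — or else prove lower semicontinuity of $m$ by hand, by exhibiting near an exceptional $\prm_0$ a list of $m(\prm_0)$ linearly independent fixed points of $\Phi_\prm$. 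The natural candidates $\Phi_\prm^\infty(\sigma)$, for $\sigma$ in a basis of $\eigenspace{\Phi_{\prm_0}}$, have linear independence near $\prm_0$ only once one knows continuity of $\prm\mapsto\Phi_\prm^\infty$ — which is Theorem~\ref{thm:operator-cts}, downstream of what we are proving — so breaking that circularity (e.g.\ via a uniform spectral-gap bound for $\Phi_\prm$ off $\eigenspace{\Phi_\prm}$ valid near $\prm_0$) is the heart of the matter.
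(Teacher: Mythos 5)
Your route is genuinely different from the paper's, and it is sound; the step you flag as the ``crux'' is in fact a known theorem, so your hedging there is unnecessary. Both arguments share the key observation that the lattice of enclosures of $\Phi_\prm$ is independent of $\prm\in(0,1)$ (your positivity argument is a clean alternative to the paper's route via the Kraus-operator criterion $K_i\calV\subseteq\calV$ of Fact~\ref{fact:kraus-encl}, combined with the scaling-invariance of Kraus operators, which is what the paper packages as ``combinatorial equivalence''). From there the paper works directly with the minimal-enclosure decomposition $\hlbrt=\calD\oplus\bigoplus_i\calW_i$: it cites structure theorems (Theorems~\ref{thm:BN} and~\ref{thm:CP}) asserting both that such a decomposition exists with $\dim\eigenspace{\Phi}=\sum_i m_i^2$ and that \emph{any} such decomposition yields the matching upper bound; since the decomposition depends only on which subspaces are minimal enclosures, a single decomposition serves for every $\prm$, and the two one-sided bounds close the argument with no faithfulness reduction needed. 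You instead restrict to the recurrent subspace (reducing to the faithful case) and pass to the commutant picture. The identification you worry about, $\dim\eigenspace{\Phi}=\dim\mathcal{N}'$ for a channel with a faithful fixed state, is exactly Frigerio's theorem (see e.g.~\cite{Wol12}); it is your ``transfer to $\Phi^\dagger$'' option made precise: $\eigenspace{\Phi^\dagger}=\{K_i,K_i^\dagger\}'$ when $\Phi^\dagger$ is unital CP and $\Phi$ has a faithful fixed state, and $\dim\eigenspace{\Phi}=\dim\eigenspace{\Phi^\dagger}$ since a square matrix and its adjoint have equal nullity, so no $\sigma_\prm$-dependent conjugation ever enters at the level of dimensions. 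With that in hand, rescaling Kraus operators does not change the $*$-algebra they generate, which finishes your proof. In sum: the paper avoids the faithfulness reduction and Frigerio at the cost of invoking the full decomposition theorem and its explicit dimension count; your version needs the reduction and Frigerio but makes the final step especially transparent. Both are driven by the same enclosure-invariance observation, and both are valid.
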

We will discuss the intuition for this theorem below.  But first we will observe that Theorem~\ref{thm:operator-cts} is an elementary linear-algebraic consequence of Theorem~\ref{thm:same-dim2}. This deduction of Theorem~\ref{thm:operator-cts} from Theorem~\ref{thm:same-dim2} is a little more familiar if we consider $\bbone - \Phi_\prm$ rather than~$\Phi_\prm$.  Then $\Phi_\prm^\infty$ is the projection onto the kernel of $\bbone - \Phi_\prm$, and it is elementary that, given a continuously-parameterized family of matrices like $p \mapsto \bbone - \Phi_\prm$, the kernel varies continuously wherever the nullity (in this case, $\dim \eigenspace{\Phi_\prm}$) is locally constant. For a simple explicit proof see, e.g.,~\cite{use15}.\\

Thus all that remains in this work is to prove Theorem~\ref{thm:same-dim2}.  We will do this in Section~\ref{sec:technical-proof}, but first we provide some intuition and introduce a key definition, that of \emph{combinatorially equivalent channels}.

\subsection{Intuition for Theorem~\ref{thm:same-dim2}}
All of our discussion so far applies equally to probabilistic automata defined by stochastic matrices~$S_0, S_1$.  So let us first consider the analogue of Theorem~\ref{thm:same-dim2} in this case.  Here we have a family of Markov chains defined by $S_\prm = \prm S_1 + (1-\prm) S_0$ and we want to consider the dimension of their invariant subspaces.  It is well known that the invariant subspace~$\eigenspace{S}$ of the Markov chain defined by~$S$ is spanned by a linearly independent set of invariant \emph{probabilistic-states}.  Thus $\dim \eigenspace{S}$ is equal to the number of linearly independent (``fundamentally different'', one might say) invariant distributions.

In the study of Markov chains, it's popular to focus on the irreducible case, in which case there is a unique invariant probability distribution; i.e., $\dim \eigenspace{S} = 1$.  However in general we must consider reducible Markov chains (the ``mathematically annoying case'', as Hellman and Cover~\cite{HC70} put it).  Fortunately, the theory of reducible Markov chains is well developed, and it is known that there is one linearly independent invariant distribution per every \emph{communication class} of the Markov chain.  Here the ``communication classes'' of the Markov chain defined by~$S$ are precisely the strongly connected components of the underlying digraph on~$[d]$; i.e., the graph which has a directed edge $(i,j)$ whenever $S_{ij} \neq 0$.  Given this theory, it is easy to deduce the analogue of Theorem~\ref{thm:same-dim2}; the point is that for any fixed $S_0, S_1$, \emph{the underlying digraph of~$S_\prm$ is the same for all $\prm \in (0,1)$}.  Since $S_\prm = \prm S_1 + (1-\prm)S_0$, an edge $(i,j)$ is present in $S_\prm$ is present if and only if it is present in both~$S_0$ and~$S_1$.  Thus $S_\prm$ has the same set (hence number) of communication classes for all~$\prm \in (0,1)$, as needed.

In this paper, we show there is an analogous sequence of ideas in the quantum case, using some of the recently developed theory of fixed points of quantum channels.  Given a quantum channel~$\Phi$, it is known~\cite[Cor.~6.5]{Wol12} that $\eigenspace{\Phi}$ is always spanned by linearly independent \emph{quantum-states}. The analogous notion to communication classes is that of \emph{minimal enclosures}.  Further, similar to how the communication classes of a Markov chain are determined only by the nonzero pattern of its transition matrix, the minimal enclosures of a quantum channel are determined only by its Kraus operators.  We then make use of the fact that all the convex combinations $\Phi_\prm$ of two channels $\Phi_0, \Phi_1$ have related Kraus operators.  Specifically, we introduce the following notion:
\begin{definition} 	
    We will say that two channels $\Phi$ and $\wh{\Phi}$ (with the same Hilbert space~$\hlbrt$) are \emph{combinatorially equivalent} if there are Kraus operators $K_1, \dots, K_r$ for $\Phi$ and $\wh{K}_1, \dots, \wh{K}_{\wh{r}}$ for $\wh{\Phi}$ such that each $K_i$ is proportional to some $\wh{K}_{i'}$ and vice versa.
\end{definition}
Given channels $\Phi_0, \Phi_1$ with Kraus operators $\{K^{(0)}_i : i \in [r_0]\}, \{K^{(1)}_j : j \in [r_1]\}$ respectively, the channel $\Phi_\prm = \prm \Phi_1 + (1-\prm) \Phi_0$ has Kraus operators $\{\sqrt{1-\prm} K^{(0)}_i : i \in [r_0]\} \cup \{\sqrt{\prm} K^{(1)}_j : j \in [r_1]\}$.  Thus the channels $\Phi_\prm$ are all pairwise combinatorially equivalent for $\prm \in (0,1)$ (though not necessarily for $\prm \in \{0,1\}$). To show Theorem~\ref{thm:same-dim2}, it therefore suffices to show the following more general result:
\begin{theorem}	\label{thm:same-dim} 	
    Suppose $\Phi$ and $\wh{\Phi}$ are combinatorially equivalent.  Then $\dim \eigenspace{\Phi} = \dim \eigenspace{\wh{\Phi}}$.
\end{theorem}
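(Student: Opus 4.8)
The plan is to reduce the equality of dimensions to a structural fact about minimal enclosures, and then show that minimal enclosures depend only on the "combinatorial data" of the Kraus operators. Concretely, I would first recall the fixed-point theory of quantum channels (as in \cite{Wol12}): for a channel $\Phi$, the fixed-point space $\eigenspace{\Phi}$ decomposes nicely, and its dimension can be read off from a decomposition of $\hlbrt$ into \emph{minimal enclosures} (equivalently, via the structure theorem, a direct sum $\bigoplus_k \C^{m_k} \tensor \C^{n_k}$ supporting the "fast" part of the dynamics, with the dimension of the invariant subspace being $\sum_k m_k^2$, or something of that flavor). The key point I want to extract is: $\dim \eigenspace{\Phi}$ is a function purely of the lattice/structure of enclosures, where an \emph{enclosure} is a subspace $V \le \hlbrt$ such that $K_i V \subseteq V$ for all Kraus operators $K_i$ of $\Phi$.

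The second, and central, observation is that the notion of an enclosure is invariant under combinatorial equivalence. If $V$ is an enclosure for $\Phi$ with respect to Kraus operators $K_1,\dots,K_r$, meaning $K_i V \subseteq V$ for all $i$, and $\wh\Phi$ is combinatorially equivalent via $\wh K_1,\dots,\wh K_{\wh r}$ with each $\wh K_{i'}$ proportional to some $K_i$, then $\wh K_{i'} V \subseteq V$ as well (scaling a map doesn't change its invariant subspaces). Hence $V$ is an enclosure for $\wh\Phi$, and symmetrically. So $\Phi$ and $\wh\Phi$ have \emph{exactly the same} collection of enclosures, hence the same minimal enclosures, hence the same lattice structure, hence the same value of $\dim \eigenspace{\Phi}$.

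So the skeleton is: (1) establish a formula/characterization $\dim \eigenspace{\Phi} = F(\text{poset of minimal enclosures of } \Phi)$ for some function $F$ depending only on that combinatorial structure; (2) show enclosures are determined by the Kraus operators up to proportionality, so combinatorially equivalent channels share the same enclosures; (3) conclude. For step (1) I would lean on the known structure theory: fixed points of $\Phi$ correspond to "decoherence-free" block structures sitting inside the minimal enclosures, and the count of linearly independent invariant quantum-states is governed by how the minimal enclosures overlap/nest and the multiplicities in the $\C^{m_k}\tensor\C^{n_k}$ decomposition. I need to be careful that all the ingredients of that decomposition that feed into the dimension count — the number of minimal enclosures, their pairwise "orthogonality/disjointness" relations, the multiplicity spaces — are themselves combinatorial invariants of the enclosure lattice and not of the specific numerical values of the $K_i$.

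The main obstacle I expect is precisely that last subtlety in step (1): it is not a priori obvious that $\dim \eigenspace{\Phi}$ is determined by the \emph{set} of (minimal) enclosures alone. The dimension involves multiplicities $m_k$ coming from how the channel acts \emph{within} an irreducible block, and one must check that these multiplicities — e.g.\ whether two minimal enclosures carry "the same" irreducible piece, contributing a multiplicity — are encoded combinatorially, i.e.\ recoverable from which subspaces are enclosures (perhaps by looking at enclosures inside direct sums of minimal ones, or at the common enclosures of "rotated" copies). If the cleanest path is to phrase everything in terms of enclosures and their intersections/spans rather than invoking the full $\bigoplus \C^{m}\tensor\C^{n}$ structure theorem, that may sidestep the issue; alternatively one proves a lemma that the multiplicity data is itself expressible via existence of certain enclosures, which is where the real work lies. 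Once that bookkeeping is pinned down, the transfer to $\wh\Phi$ via "same enclosures" is immediate and the theorem follows.
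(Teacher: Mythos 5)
Your step (2)---that the family of enclosures, and hence of \emph{minimal} enclosures, is invariant under combinatorial equivalence via the Kraus-operator characterization (Fact~\ref{fact:kraus-encl})---is exactly the paper's argument, and you state it correctly. But you have honestly flagged the real gap yourself: step (1), showing that $\dim \eigenspace{\Phi}$ is a function purely of the lattice of (minimal) enclosures, ``is where the real work lies,'' and you don't supply it. As you suspect, this is not a priori obvious; in particular it is not clear from the $\bigoplus_i \C^{m_i} \tensor \C^{d_i}$ structure alone that the multiplicities $m_i$ are recoverable from the enclosure poset.

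The paper avoids ever proving such a lemma. Instead it uses an asymmetric combination of two existing structure theorems plus symmetry. Theorem~\ref{thm:BN} (Wolf, Baumgartner--Narnhofer) says there exists a minimal enclosure decomposition of~$\hlbrt$ for~$\Phi$ with $\dim \eigenspace{\Phi} = \sum_i m_i^2$ \emph{exactly}. Theorem~\ref{thm:CP} (Carbone--Pautrat) says that for \emph{any} minimal enclosure decomposition of a channel (not necessarily the ``right'' one), the dimension of the fixed-point space is at most $\sum_i m_i^2$. The crucial observation is that Definition~\ref{def:min-encl-decomp} of a minimal enclosure decomposition refers only to which subspaces are minimal enclosures and which is the decaying subspace (itself determined by the minimal enclosures via~\eqref{eqn:decay}). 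Since by your step (2) $\Phi$ and $\wh{\Phi}$ have identical minimal enclosures, the BN-decomposition for~$\Phi$ is automatically a (possibly non-optimal) minimal enclosure decomposition for~$\wh{\Phi}$; Theorem~\ref{thm:CP} then gives $\dim \eigenspaceprime \leq \sum_i m_i^2 = \dim \eigenspace{\Phi}$, and swapping the roles of $\Phi$ and~$\wh{\Phi}$ gives the reverse inequality. This one-sided-bound-plus-symmetry trick is the idea your proposal is missing; it lets you conclude equality of dimensions without ever showing that the multiplicities $m_i$ are themselves combinatorial invariants.
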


\section{The last step: proof of Theorem~\ref{thm:same-dim}}\label{sec:technical-proof}

To prove Theorem~\ref{thm:same-dim}, we use some known results concerning the decomposition of a quantum channel into irreducible components, and the structure of its invariant quantum-states.  We will specifically use the key decomposition theorem appearing variously as \cite[Theorem~6.14]{Wol12}, \cite[Theorem~7]{BN12}, \cite[Theorem~7.2]{CP15}.

Let $\Phi$ denote a quantum channel on $\bounded$ with Kraus operators $K_1, \dots, K_r$.  We are interested in  $\dimfixed = \dim \eigenspace{\Phi}$, the dimension of the space of $\Phi$'s fixed points. As $\Phi$ is a quantum channel, it is known~\cite[Prop.~6.1]{Wol12} that its spectral radius is~$1$ and that it has at least one eigenvalue equal to~$1$; thus $\dimfixed \geq 1$.  As mentioned, it is also known~\cite[Cor.~6.5]{Wol12} that $\eigenspace{\Phi}$ is always spanned by some~$\dimfixed$ linearly independent \emph{quantum-states}.

If $\rho$ is a quantum-state, its \emph{support} $\supp(\rho)$ is simply the range of~$\rho$ as a subspace of $\hlbrt$.
The \emph{recurrent subspace} for $\Phi$ is the subspace of $\hlbrt$ defined by
\[ 	
    \calR = \spn\{\supp(\rho) : \text{$\rho$ is an invariant quantum-state}\}.
\]
The orthogonal complement of $\calR$ in $\calH$ is denoted $\calD$; this is the \emph{decaying} (or \emph{transient}) subspace.  A subspace $\calV \subseteq \hlbrt$ is called an \emph{enclosure} if $\supp(\rho) \subseteq \calV \implies \supp(\Phi(\rho)) \subseteq \calV$ for all quantum-states~$\rho$.  We can relate this concept to Kraus operators via the following equivalence:
\begin{fact} \label{fact:kraus-encl} 	
    (\cite[Proposition~4.4]{CP15}.) $\calV$ is an enclosure if and only if $K_i \calV \subseteq \calV$ for all Kraus operators~$K_i$. 	
\end{fact}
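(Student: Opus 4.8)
The plan is to establish both directions directly from the Kraus form $\Phi(\rho) = \sum_{i=1}^{r} K_i \rho K_i^\dagger$, relying on one elementary linear-algebra observation: for any positive semidefinite $\rho \in \bounded$ and any $A \in \bounded$, one has $\mathrm{range}(A\rho A^\dagger) = A(\supp(\rho))$. This holds because $A\rho A^\dagger = (A\rho^{1/2})(A\rho^{1/2})^\dagger$, so its range equals $\mathrm{range}(A\rho^{1/2}) = A\bigl(\mathrm{range}(\rho^{1/2})\bigr)$, while $\rho^{1/2}$ has the same kernel as~$\rho$ (since $\rho x = 0 \iff \langle x, \rho x\rangle = \|\rho^{1/2}x\|^2 = 0$) and hence the same range by self-adjointness; thus $\mathrm{range}(\rho^{1/2}) = \mathrm{range}(\rho) = \supp(\rho)$. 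Finite-dimensionality keeps all ranges closed, so no analytic subtleties arise. I will also use the trivial fact $\mathrm{range}(B + C) \subseteq \mathrm{range}(B) + \mathrm{range}(C)$, and that the range of a finite sum $\sum_i \ket{w_i}\bra{w_i}$ of rank-one positive semidefinite operators is exactly $\spn\{w_i\}$ (a vector is orthogonal to this range iff it is orthogonal to every~$w_i$).

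For the ``if'' direction, assume $K_i \calV \subseteq \calV$ for all~$i$, and let $\rho$ be a quantum-state with $\supp(\rho) \subseteq \calV$. Since $\Phi$ is completely positive and trace-preserving, $\Phi(\rho)$ is again a quantum-state, and by the observation above $\mathrm{range}(K_i \rho K_i^\dagger) = K_i(\supp(\rho)) \subseteq K_i(\calV) \subseteq \calV$ for every~$i$. Summing over~$i$, $\supp(\Phi(\rho)) = \mathrm{range}\bigl(\sum_i K_i \rho K_i^\dagger\bigr) \subseteq \sum_i K_i(\calV) \subseteq \calV$, so $\calV$ is an enclosure.

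For the ``only if'' direction, assume $\calV$ is an enclosure; the case $\calV = \{0\}$ is trivial, so fix a unit vector $v \in \calV$. The pure state $\rho = \ket{v}\bra{v}$ has $\supp(\rho) = \spn\{v\} \subseteq \calV$, hence $\supp(\Phi(\rho)) \subseteq \calV$ by the enclosure property. But $\Phi(\rho) = \sum_i \ket{K_i v}\bra{K_i v}$, so $\supp(\Phi(\rho)) = \spn\{K_i v : i \in [r]\}$; in particular each $K_i v$ lies in~$\calV$. Since the unit vectors of~$\calV$ span it, this yields $K_i \calV \subseteq \calV$ for all~$i$.

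The argument is elementary throughout; the only step requiring any care is the identity $\mathrm{range}(A\rho A^\dagger) = A(\supp(\rho))$ — in particular the fact that $\rho$ and $\rho^{1/2}$ share a kernel (equivalently, a range) — together with the choice of rank-one test states in the ``only if'' direction. Everything else is routine manipulation of ranges of sums and spans of rank-one operators, so I do not anticipate a genuine obstacle.
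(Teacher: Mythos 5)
Your proof is correct: the range identity $\mathrm{range}(A\rho A^\dagger)=A(\supp(\rho))$ gives the ``if'' direction, and rank-one test states $\ket{v}\bra{v}$ with $v\in\calV$ give the ``only if'' direction, which is the standard argument. Note the paper itself does not prove this fact but simply cites \cite[Proposition~4.4]{CP15}; your write-up is a correct, self-contained substitute for that citation (and, as expected, the Kraus criterion is proved for a fixed Kraus representation, which is all that is used later).
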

An enclosure $\calV$ is called \emph{minimal} if it is nonzero and all enclosures $\calV' \subseteq \calV$ are equal to either $\{0\}$ or~$V$.  It is also known~\cite[Prop.~15]{BN12} that a subspace of~$\hlbrt$ is a minimal enclosure if and only if it is the support of an extremal invariant quantum-state, meaning one that cannot be written as a nontrivial convex combination of two distinct invariant quantum-states.  One consequence is that
\begin{equation}	\label{eqn:decay} 	
    \calR = \spn\{\supp(\rho) : \text{$\rho$ is an extremal invariant quantum-state}\} = \spn\{\calV : \text{$\calV$ is a minimal enclosure}\}.	
\end{equation}

The theorems \cite[Theorem~6.14]{Wol12}, \cite[Theorem~7]{BN12}, \cite[Theorem~7.2]{CP15} characterize~$\eigenspace{\Phi}$ and the quantum-states therein in slightly different ways.  To explain, we make some
definitions.
\begin{definition}	\label{def:min-encl-decomp}
    (In this definition, $k$, $m_1, \dots, m_k$, $d_1, \dots, d_k$ denote positive integers.)

	Given $\Phi$, we define a \emph{minimal enclosure decomposition} to be an orthogonal decomposition of
$\hlbrt$ into subspaces
    \begin{equation} \label{eqn:decomp}
		\hlbrt = \calD \oplus \bigoplus_{i=1}^{k} \calW_{i}, \quad \text{where }\calW_i = \bigoplus_{j=1}^{m_{k}} \calV_{i,j}
    \end{equation}
	in which $\calD$ is the decaying subspace for~$\Phi$, each $\calV_{i,j}$ is a minimal enclosure, $\dim
\calV_{i,j} = d_i$ for all $1 \leq j \leq m_i$, and finally the following property holds: \emph{For any minimal
enclosure $\calX$ of~$\Phi$ and any $1 \leq i \leq k$, if $\calX$ is not orthogonal to~$\calW_i$ then $\calX
\subseteq \calW_i$.  (In particular, if $m_i = 1$ then $\calX$ must equal~$\calW_i$.)}
\end{definition}
\begin{definition} 	
    Suppose we have a minimal enclosure decomposition for~$\Phi$ as above.  Fix any ordered
orthogonal basis for $\hlbrt$ compatible with~\eqref{eqn:decomp} (meaning the first $\dim \calD$ elements span
$\calD$, the next $m_1 d_1$ elements come in $m_1$ groups of $d_1$ spanning $\calV_{1,1}, \dots, \calV_{1,m_1}$
respectively, etc.).  Let $X \in \bounded$, and think of~$X$ in its matrix form with respect to the ordered
basis.

    Then we say that $X$ \emph{respects the minimal enclosure decomposition} if $X$ is block-diagonal with blocks
    corresponding to $\calD$, $\calW_1, \dots, \calW_k$, and furthermore $X$ is~$0$ on the $\calD$-block and is
    of the form $A_i \otimes \rho_i$ on the $\calW_i$-block for some $A_i \in \calC^{m_i \times m_i}$ and some
    strictly positive density matrix $\rho_i \in \calC^{d_i \times d_i}$. In symbols,
    \[
    	X = 0 \oplus \bigoplus_{i=1}^k A_i \otimes \rho_i.
    \]
	(We remark that the property of respecting the minimal enclosure decomposition does not depend on the choice
of the compatible orthogonal basis.)
\end{definition}

In combination, \cite[Theorem~6.14]{Wol12}, \cite[Theorem~7]{BN12} state the following:\footnote{\cite{Wol12} deals with the invariant subspace whereas~\cite{BN12} deals with the invariant quantum-states.  The fact that the $\rho_i$'s are strictly positive is in \cite{Wol12}. Finally, \cite{BN12} does not explicitly show that the minimal enclosure decomposition satisfies the last, italicized, condition in Definition~\ref{def:min-encl-decomp}. However, it's implicit and it's easy to deduce: we know that any minimal enclosure $\calX$ is the support of some extremal invariant quantum-state~$\rho$, and it's clear that if this support is not entirely within a single $\calW_i$-block then $\rho$ would not be extremal.}
\begin{theorem}	\label{thm:BN} 	
    Given any channel~$\Phi$, there exists a minimal enclosure
    decomposition as in~\eqref{eqn:decomp} such that
    $\eigenspace{\Phi}$ consists precisely of \emph{all}~$X \in
    \bounded$ that respect the decomposition. (An immediate
    consequence is that $\dimfixed = \dim \eigenspace{\Phi} = \sum_i
    m_i^2$.)  Finally, the quantum-states that are invariant are
    precisely  all such~$X$ with $A_i = \lambda_i \sigma_i$, where $\sigma_1, \dots, \sigma_k$ are density matrices and $\lambda_1, \dots, \lambda_k$ are nonnegative reals summing to~$1$.
\end{theorem}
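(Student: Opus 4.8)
I sketch a proof of this decomposition theorem in three stages: (i) reduce to the \emph{recurrent part} of $\Phi$, a channel that carries a faithful invariant state; (ii) show that the fixed points of the adjoint (Heisenberg-picture) channel on that part form a finite-dimensional $*$-algebra, then invoke Artin--Wedderburn; and (iii) dualize back to the Schr\"odinger picture to extract the claimed $A_i \tensor \rho_i$ form, and match the resulting data against Definition~\ref{def:min-encl-decomp}.

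\emph{Stage (i).} By \eqref{eqn:decay} the recurrent subspace $\calR$ is a sum of minimal enclosures, hence an enclosure (by Fact~\ref{fact:kraus-encl}, a sum of enclosures is an enclosure), so again by Fact~\ref{fact:kraus-encl} every Kraus operator $K_i$ of $\Phi$ is block upper-triangular with respect to $\hlbrt = \calR \oplus \calD$; write $R_i$ for its $\calR$-block. From $\sum_i K_i^\dagger K_i = \bbone$ one reads off $\sum_i R_i^\dagger R_i = \id{\calR}$, so the $R_i$ are Kraus operators of a channel $\Phi_{\calR}$ on $\mathcal{B}(\calR)$, and a one-line block computation gives $\Phi(0 \oplus Y) = 0 \oplus \Phi_{\calR}(Y)$ whenever $Y$ is supported on $\calR$. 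On the other hand, $\eigenspace{\Phi}$ is spanned by invariant quantum-states, each of which is supported on $\calR$ by definition of $\calR$ and hence fixed by $X \mapsto P_\calR X P_\calR$ (with $P_\calR$ the orthogonal projection onto $\calR$); so every element of $\eigenspace{\Phi}$ is supported on $\calR$. Combining the last two facts, $\eigenspace{\Phi} = 0 \oplus \eigenspace{\Phi_{\calR}}$, with invariant quantum-states corresponding as well. Finally, the sum of a spanning family of invariant quantum-states of $\Phi_{\calR}$ is, up to normalization, an invariant state with support all of $\calR$ --- precisely because $\calR$ is the span of such supports --- so $\Phi_{\calR}$ has a \emph{faithful} invariant state $\sigma$. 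Thus it suffices to prove the theorem for a channel $\Psi \coloneqq \Phi_{\calR}$ on $\calK \coloneqq \calR$ possessing a faithful invariant state $\sigma$ and with trivial decaying subspace; the general case follows by re-adjoining $\calD$.

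\emph{Stage (ii).} Let $\Psi^*(Y) = \sum_i R_i^\dagger Y R_i$ be the adjoint of $\Psi$; it is unital and completely positive, and its spectrum is the complex conjugate of that of $\Psi$, so (the spectral radius being $1$, as for $\Phi^\infty$) $(\Psi^*)^\infty$ exists and projects onto $\calN \coloneqq \eigenspace{\Psi^*}$. The crucial claim is that $\calN$ is a unital $*$-subalgebra of $\mathcal{B}(\calK)$, and this is the one place the faithful invariant state is used: by the Kadison--Schwarz inequality, $\Psi^*(a) = a$ implies $\Psi^*(a^\dagger a) - a^\dagger a \succeq 0$, while $\langle \sigma, \Psi^*(a^\dagger a) - a^\dagger a \rangle = \langle \Psi(\sigma) - \sigma, a^\dagger a \rangle = 0$ since $\Psi(\sigma) = \sigma$; as $\sigma$ is faithful this forces $\Psi^*(a^\dagger a) = a^\dagger a$, and polarization then gives $\Psi^*(a^\dagger b) = a^\dagger b$ for all $a, b \in \calN$, so $\calN$ is closed under products (it is visibly closed under $\dagger$ and contains $\bbone$). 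By the Artin--Wedderburn classification of finite-dimensional $*$-algebras, there is an orthogonal decomposition $\calK = \bigoplus_{i=1}^{k} \calW_i$ together with unitary identifications $\calW_i \cong \C^{m_i} \tensor \C^{d_i}$ under which $\calN = \bigoplus_{i=1}^{k} M_{m_i}(\C) \tensor \id{d_i}$; in particular $\dim \calN = \sum_i m_i^2$.

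\emph{Stage (iii).} Put $\calE \coloneqq (\Psi^*)^\infty$, so that $\Psi^\infty = \calE^*$ (adjoints commute with the Ces\`aro limit). As a unital CP idempotent onto the $*$-algebra $\calN$, $\calE$ is a conditional expectation --- $\calN$ is contained in its multiplicative domain, giving $\calE(n_1 X n_2) = n_1 \calE(X) n_2$ for $n_1, n_2 \in \calN$. This bimodule identity kills the off-block part of $\calE$, and on the $i$-th block it forces $\calE(\id{m_i} \tensor B) \in (M_{m_i} \tensor \id{d_i})' \cap (M_{m_i} \tensor \id{d_i}) = \C \bbone$, whence $\calE(A \tensor B) = \tr(\rho_i B)\,(A \tensor \id{d_i})$ for some density matrix $\rho_i$ on $\C^{d_i}$. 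Dualizing, $\Psi^\infty = \calE^*$ maps an operator to the direct sum over $i$ of (the partial trace of its $\calW_i$-block over the $\C^{d_i}$ factor) $\tensor\, \rho_i$, so $\eigenspace{\Psi} = \mathrm{range}(\Psi^\infty) = \bigoplus_i M_{m_i}(\C) \tensor \rho_i$ and $\dimfixed = \sum_i m_i^2$; each $\rho_i$ is strictly positive because the faithful $\sigma \in \eigenspace{\Psi}$ has the form $\bigoplus_i A_i \tensor \rho_i$. Pulling everything back through $\hlbrt = \calD \oplus \bigoplus_i \calW_i$: choosing an orthonormal basis $e_1, \dots, e_{m_i}$ of $\C^{m_i}$, the subspace $\calV_{i,j} \coloneqq \C e_j \tensor \C^{d_i}$ is the support of the invariant state $e_j e_j^\dagger \tensor \rho_i$, which is extremal, so each $\calV_{i,j}$ is a minimal enclosure of dimension $d_i$ and $\calW_i = \bigoplus_j \calV_{i,j}$. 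The italicized clustering condition in Definition~\ref{def:min-encl-decomp} falls out because any minimal enclosure is the support of an extremal invariant state, which by extremality equals $B \tensor \rho_{i_0}$ inside a single $\calW_{i_0}$ for a rank-one $B$, hence is contained in $\calW_{i_0}$ and orthogonal to the other $\calW_i$; and the description of the invariant quantum-states (those with $A_i = \lambda_i \sigma_i$) is then immediate from positivity and $\tr \rho_i = 1$. I expect Stage (ii) --- extracting the $*$-algebra structure from a single faithful invariant state via Kadison--Schwarz --- to be the conceptual heart; the rest is block-matrix bookkeeping together with standard facts about finite-dimensional $*$-algebras and conditional expectations.
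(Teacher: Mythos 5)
Your sketch is essentially correct, but note the point of comparison: the paper does not prove Theorem~\ref{thm:BN} at all. It imports the statement from the literature, citing \cite[Theorem~6.14]{Wol12} and \cite[Theorem~7]{BN12} (with \cite[Theorem~7.2]{CP15} for the companion statement), and the only thing it argues itself, in a footnote, is the italicized clustering condition of Definition~\ref{def:min-encl-decomp} --- deduced from the fact that a minimal enclosure is the support of an extremal invariant quantum-state, which cannot straddle two blocks. What you have written is, in effect, the standard proof underlying those cited results: restrict to the recurrent subspace, where Fact~\ref{fact:kraus-encl} makes the Kraus operators block upper-triangular and the compression is again a channel with a faithful invariant state (your use of the paper's cited fact that $\eigenspace{\Phi}$ is spanned by invariant quantum-states, and of \eqref{eqn:decay}, is legitimate and non-circular, since those are established independently); show via Kadison--Schwarz plus faithfulness that the fixed points of the unital dual channel form a finite-dimensional $*$-algebra and apply the structure theorem; then dualize through the conditional-expectation/multiplicative-domain argument to get the blocks $A_i \otimes \rho_i$ with one fixed, strictly positive $\rho_i$ per block, and read off the minimal enclosures $\calV_{i,j}$ as supports of the extremal invariant states $e_j e_j^\dagger \otimes \rho_i$. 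Your treatment of the italicized condition (an extremal invariant state has rank-one $A$-part, so its support sits inside a single $\calW_{i}$) is exactly the paper's footnote argument. Two harmless gaps to tidy if you write this up: the existence of $(\Psi^*)^\infty$ needs power-boundedness of $\Psi^*$ (spectral radius~$1$ alone does not rule out peripheral Jordan blocks; power-boundedness follows since $\Psi$ is trace-preserving and positive), and when you ``re-adjoin $\calD$'' you should say explicitly that enclosures of $\Phi$ contained in $\calR$ coincide with enclosures of $\Phi_{\calR}$, so the minimality statements transfer. In short: your route is more self-contained than the paper's, buying independence from \cite{Wol12,BN12,CP15} at the cost of reproving their structure theory, whereas the paper's actual contribution lies elsewhere (combinatorial equivalence and Theorem~\ref{thm:same-dim}).
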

The statement of \cite[Theorem~7.2]{CP15} is slightly different:\footnote{The first statement of this theorem is \cite[Proposition~7.1]{CP15}, except that that Proposition does not include the final, italicized, property of Definition~\ref{def:min-encl-decomp} for those~$i$ with $m_i = 1$.  However it is evident from the proof that this is an oversight; a person communication from the authors confirmed this.}
\begin{theorem}	\label{thm:CP} 	Given any channel~$\Phi$, at least one minimal enclosure decomposition exists.  Furthermore, given \emph{any} minimal enclosure decomposition 	
    \[ 		
        \hlbrt = \calD \oplus \bigoplus_{i=1}^{\wh{k}} \wh{\calW}_{i}, \quad \text{where }\wh{\calW}_i =
\bigoplus_{j=1}^{\wh{m}_{\wh{k}}} \wh{\calV}_{i,j}, 	
    \]
    every invariant quantum-state for~$\Phi$ respects it.  (As an immediate consequence, ${\dimfixed = \dim
    \eigenspace{\Phi} \leq \sum_i \wh{m}_i^2}$.)
\end{theorem}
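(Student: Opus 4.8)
The existence assertion is free: Theorem~\ref{thm:BN} already produces a minimal enclosure decomposition. So the plan is to fix the particular decomposition $\hlbrt = \calD \oplus \bigoplus_{i=1}^{k}\calW_i$ furnished by Theorem~\ref{thm:BN}, together with its tensor structures $\calW_i \cong \calC^{m_i}\otimes\calC^{d_i}$ (with $\calV_{i,j}$ the $j$-th $\calC^{d_i}$ summand), for which that theorem supplies a \emph{fixed} choice of strictly positive density matrices $\rho_i$ such that the invariant quantum-states of $\Phi$ are exactly the operators $0 \oplus \bigoplus_i(\lambda_i\sigma_i)\otimes\rho_i$ with $\lambda_i \ge 0$, $\sum_i\lambda_i = 1$ and each $\sigma_i$ a density matrix; and then to compare this canonical decomposition against an arbitrary minimal enclosure decomposition $\hlbrt = \calD \oplus \bigoplus_{i=1}^{\wh{k}}\wh{\calW}_i$, $\wh{\calW}_i = \bigoplus_j\wh{\calV}_{i,j}$. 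Let $\rho$ be any invariant quantum-state. For block-diagonality and the vanishing of the $\calD$-block, I would write $\rho = \sum_\ell p_\ell\rho_\ell$ with each $\rho_\ell$ an extremal invariant quantum-state, using that the (compact, convex) set of invariant quantum-states is the convex hull of its extreme points. By~\cite[Prop.~15]{BN12}, $\calX_\ell := \supp(\rho_\ell)$ is a minimal enclosure, and $\calX_\ell \subseteq \calR = \bigoplus_i\wh{\calW}_i$ by~\eqref{eqn:decay}; being nonzero, $\calX_\ell$ is non-orthogonal to some $\wh{\calW}_i$, so the defining ``non-orthogonal $\Rightarrow$ contained'' property of the decomposition, together with the mutual orthogonality of the $\wh{\calW}_i$, forces $\calX_\ell \subseteq \wh{\calW}_{i(\ell)}$ for a single index $i(\ell)$. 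Grouping the $\rho_\ell$ by $i(\ell)$ gives $\rho = \bigoplus_i\rho^{(i)}$, with each $\rho^{(i)}$ supported in $\wh{\calW}_i$ and equal, up to scale, to an invariant quantum-state.

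What remains --- and where I expect the real work --- is to show each $\rho^{(i)}$ has the product form $\wh{A}_i\otimes\wh{\rho}_i$ required by ``respects.'' Here I would use Theorem~\ref{thm:BN} again to classify the minimal enclosures of $\Phi$: an extreme point of the invariant-state set must place all its weight on a single block and have a pure $\sigma_i$ there (any other invariant quantum-state is a nontrivial convex combination of two distinct ones), so the extremal invariant quantum-states are $0 \oplus \cdots \oplus \sigma\otimes\rho_i \oplus \cdots \oplus 0$ with $\sigma$ a pure state on $\calC^{m_i}$; since $\rho_i$ is strictly positive, \cite[Prop.~15]{BN12} then identifies the minimal enclosures of $\Phi$ with the ``slices'' $(\calC v)\otimes\calC^{d_i} \subseteq \calW_i$, $v\in\calC^{m_i}$. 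Thus each $\wh{\calV}_{i,j}$ is one of these slices. Testing the ``non-orthogonal $\Rightarrow$ contained'' property of $\wh{\calW}_i$ against the slices of a fixed $\calW_a$ --- such a slice fails to be orthogonal to $\wh{\calW}_i$ exactly when its direction meets $U := \{u : (\calC u)\otimes\calC^{d_a}\subseteq\wh{\calW}_i\}$, and it must then lie in $\wh{\calW}_i$ --- forces $U = \{0\}$ or $U = \calC^{m_a}$; hence $\wh{\calW}_i$ is an orthogonal sum of entire canonical blocks $\calW_a$. Since the minimal enclosures comprising one block of a minimal enclosure decomposition are mutually $\Phi$-equivalent, while the blocks of Theorem~\ref{thm:BN} are exactly the $\Phi$-equivalence classes of minimal enclosures, only one $\calW_a$ can occur, i.e.\ $\wh{\calW}_i = \calW_{a(i)}$. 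By Theorem~\ref{thm:BN}, $\rho^{(i)}$ equals $(\lambda\sigma)\otimes\rho_{a(i)}$ on $\calW_{a(i)}$; re-expressing it in a basis compatible with the (merely rotated) slicing $\wh{\calV}_{i,\bullet}$ keeps it a product $\wh{A}_i\otimes\wh{\rho}_i$ with $\wh{\rho}_i$ strictly positive. So $\rho$ respects the decomposition.

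The stated numerical consequence is then immediate: the $\wh{\rho}_i$ are common to all invariant quantum-states, so these all lie in the $\sum_i\wh{m}_i^2$-dimensional subspace $\{\,0\oplus\bigoplus_i B_i\otimes\wh{\rho}_i : B_i\in\calC^{\wh{m}_i\times\wh{m}_i}\,\}$, and since $\eigenspace{\Phi}$ is spanned by invariant quantum-states we get $\dimfixed = \dim\eigenspace{\Phi} \le \sum_i\wh{m}_i^2$. The step I expect to be the genuine obstacle is the block-rigidity claim $\wh{\calW}_i = \calW_{a(i)}$ --- equivalently, the piece of the irreducible-decomposition theory that an enclosure spanned by mutually $\Phi$-equivalent minimal enclosures of a common dimension factors as $\Theta\otimes\Phi^\circ$ with $\Phi^\circ$ irreducible (so its invariant states are forced into the product form with a common strictly positive $\wh{\rho}_i$), together with the fact that inequivalent minimal enclosures never share a block. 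Everything else is bookkeeping built on Theorem~\ref{thm:BN} and the quoted facts about minimal enclosures and extremal invariant quantum-states.
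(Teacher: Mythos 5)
The paper itself gives no proof of Theorem~\ref{thm:CP}: it is quoted from \cite[Theorem~7.2]{CP15} (together with Proposition~7.1 there, see the attached footnote), so your attempt at a self-contained derivation from Theorem~\ref{thm:BN} is by construction a different route from the paper's. Much of it is fine: decomposing an invariant state into extremal invariant states and using \cite[Prop.~15]{BN12} with~\eqref{eqn:decay} does give the vanishing $\calD$-block and block-diagonality with respect to the $\wh{\calW}_i$; Theorem~\ref{thm:BN} does imply that the minimal enclosures of~$\Phi$ are exactly the slices $(\calC v)\otimes\calC^{d_a}$ of the canonical blocks; and your argument that each $U$ is $\{0\}$ or all of $\calC^{m_a}$, hence that each $\wh{\calW}_i$ is an orthogonal sum of entire canonical blocks $\calW_a$, can be made rigorous.

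The genuine gap is exactly the step you flagged: ``only one $\calW_a$ can occur, i.e.\ $\wh{\calW}_i=\calW_{a(i)}$.'' You justify it by asserting that the minimal enclosures constituting one block of a minimal enclosure decomposition are mutually $\Phi$-equivalent; but Definition~\ref{def:min-encl-decomp} imposes no equivalence requirement at all, and under that literal definition the assertion is false. Concretely, take $\hlbrt=\calW_1\oplus\calW_2$ with $\dim\calW_1=\dim\calW_2=2$, fix strictly positive density matrices $\sigma_1$ on $\calW_1$ and $\sigma_2$ on $\calW_2$ with different spectra, and let $\Phi(X)=\tr(P_1X)\,\sigma_1+\tr(P_2X)\,\sigma_2$, where $P_i$ projects onto $\calW_i$; this is a channel whose only minimal enclosures are $\calW_1,\calW_2$, with $\calD=\{0\}$ and canonical blocks $\calW_1,\calW_2$ (each with $m=1$). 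The single-block decomposition $\wh{\calW}_1=\calW_1\oplus\calW_2$ with $\wh{\calV}_{1,1}=\calW_1$, $\wh{\calV}_{1,2}=\calW_2$, $\wh{m}_1=2$, $\wh{d}_1=2$ satisfies every clause of Definition~\ref{def:min-encl-decomp} (the italicized condition is vacuous here), yet it merges two inequivalent canonical blocks, and the invariant state $\frac{1}{2}\sigma_1\oplus\frac{1}{2}\sigma_2$ is not of the form $\wh{A}_1\otimes\wh{\rho}_1$ in any compatible basis, since that would force $\sigma_1$ and $\sigma_2$ to be unitarily similar. (The numerical consequence $\dimfixed\le\sum_i\wh{m}_i^2$ survives in this example; the ``respects'' assertion you are proving does not.) So your Step with $\wh{\calW}_i=\calW_{a(i)}$ is not merely unproven --- it fails for decompositions admitted by the stated definition, and hence no argument can close the gap as things stand. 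The missing hypothesis, that the $\wh{\calV}_{i,j}$ within each $\wh{\calW}_i$ be mutually equivalent minimal enclosures (equivalently, that each $\wh{\calW}_i$ lie inside a single equivalence class), is exactly what the source theorem in \cite{CP15} carries and what your argument silently imports; with it added, your outline does close, since each $\wh{\calW}_i$ is then contained in, and by your earlier step equal to, a single canonical $\calW_a$.
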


We are now able to give the proof of Theorem~\ref{thm:same-dim}.

\begin{proof}[Proof of Theorem~\ref{thm:same-dim}.]
    Write $\dimfixed = \dim \eigenspace{\Phi}$ and $\wh{\dimfixed} = \dim \eigenspaceprime$.  Since $\Phi$ and $\wh{\Phi}$ play symmetric roles, it suffices to show $\wh{\dimfixed} \leq \dimfixed$.  Apply Theorem~\ref{thm:BN} to $\Phi$, obtaining a minimal enclosure decomposition as in~\eqref{eqn:decomp}.  We have $\dimfixed = \sum_{i=1}^{k} m_i^2$.  We claim that this decomposition is also a minimal enclosure decomposition for~$\wh{\Phi}$.  This will finish the proof of $\wh{\dimfixed} \leq \dimfixed$, by Theorem~\ref{thm:CP}.

    To see the claim, we first observe that every enclosure~$\calV$ for $\Phi$ is an enclosure for $\wh{\Phi}$
    (and vice versa).  This follows from Fact~\ref{fact:kraus-encl}: $\calV$~satisfies $K_i \calV \subseteq
    \calV$ for each Kraus operator $K_i$ of~$\Phi$, and hence the same is true for the Kraus operators
    $\wh{K}_{i'}$ of~$\wh{\Phi}$, by combinatorial equivalence of $\Phi$ and $\wh{\Phi}$.  It then follows by definition that every \emph{minimal} enclosure for $\Phi$ is also a minimal enclosure for~$\wh{\Phi}$ (and vice versa).  Finally, the claim now follows because $\Phi$ and $\wh{\Phi}$ have the same decaying subspace (by~\eqref{eqn:decay}) and because Definition~\ref{def:min-encl-decomp} of minimal enclosure decompositions depends \emph{only} on which subspaces of~$\hlbrt$ are minimal enclosures.
\end{proof}

\bibliographystyle{alpha}
\bibliography{odonnell-bib}
\end{document}